\theoremstyle{definition}      
\newtheorem{definition}{Definition}[section]
\newtheorem{defn}{Definition}[section]
\newtheorem*{defn-non}{Definition}
\newtheorem{theorem}{Theorem}
\newtheorem*{theorem-non}{Theorem}
\newtheorem*{problem-non}{Problem}
\providecommand{\keywords}[1]{
	\begin{flushleft}
	\small\textbf{Keywords: } #1
	\end{flushleft}
	}
\providecommand{\affili}[1]{
	\begin{center} 
	\small #1
	\end{center}
	\vspace{.1cm}}
\title{SafeComp: Protocol For Certifying \\Cloud Computations Integrity}
\author{Evgeny Shishkin$^1$ \\
 Evgeny Kislitsyn$^{1,2}$}
\date{}
\begin{document}
\renewcommand{\abstractname}{Abstract}
\renewcommand\qedsymbol{$\blacksquare$}
\renewcommand*{\proofname}{Proof}
\large
\maketitle
\par\vspace{-60pt}

\affili{
	$^1$InfoTeCS, Advanced Research Department, Russia\\
	$^2$Moscow State University, Department of Mathematics and Mechanics, Russia\\
	\texttt{evgeny.shishkin@infotecs.ru\\
		    evgeny.kislitsyn@infotecs.ru}
}
\abstract{
  We define a problem of certifying computation integrity
  performed by some remote party we do not necessarily trust.
  We present a multi-party interactive protocol called SafeComp that solves 
  this problem under specified constraints.
  Comparing to the nearest related work, our protocol reduces a proof construction complexity from 
  $O(n \log{n})$ to $O(n)$, turning a communication complexity to exactly one round using a certificate of 
  a comparable length.
}
\keywords{computation integrity; 
		  interactive proofs;
		  computation certificates}
\section{Introduction}
Suppose, a user needs to compute some complex heavyweight function $C(x)$. 
It might be necessary to possess a non-trivial computational resource to do that, for example, a cluster of computational
nodes. Chances are, the user does not have it.
To compute this function, the user asks some cloud service - 
a provider - to compute it for him.
The user supplies the provider with the function $C(x)$ and some
initial data $d$, asking to compute $C(d)$,
and some time later the provider gives the
result $r$. How can the user be sure that the provided result $r$ is correct?

There are many cases when a computation result can be verified
easily. Consider a sorting function that maps an arbitrary list of integers to a sorted list of the same integers.
We can check that the output list is sorted using a simple
one-way scanning function that ensures the sameness of input/
output list elements and their expected order. 
 
On the contrary, there are cases when it is hard or maybe
even impossible to verify a
computation result without redoing the whole computation
from scratch.
Take a problem of checking that \textit{there are no} 
Hamiltonian cycles within a given graph.
\footnote{
Hamiltonian cycle is a cycle in a graph that traverse
each vertex exactly once.}
If the provider gives you a founded Hamiltonian cycle, it is
easy to check its correctness. 
But what if the provider tells you that there
are \textit{no such 
cycles}: how would you ensure that it tells you the truth?

Consider yet another problem: a model-checking of computer programs - the well-known computationally hard problem. 
Suppose, the user supplies the provider with a program and a 
list of properties it must respect, and asks if this really
holds. Some time later, the provider tells the user that a checking
procedure did not find any deviations from the specified
properties.
How would you check that: 1) the result is correct 
2) the result was not simply guessed (i.e. 
the computation was not halted until the
result was obtained)?

In modern practice, a user has to trust in provider's integrity, rely on its market authority. A computation
result is signed with a public key of the 
provider linking that trust to the obtained result.  
In this article, we present SafeComp: a multi-party interactive protocol
that solves the described problem technically, removing the
necessity to trust the authority of a cloud provider.

We still rely on a so-called \textit{trusted weak computing device}, say, a smart-contract residing on a public blockchain.
This device plays a role of autonomous transparent arbiter
that is able to decide who is right and who is wrong in case of a dispute regarding published computation results, and do that
performing only a tiny fraction of the whole computation.
 
 Comparing to the nearest related work - TrueBit protocol - our 
 protocol reduces proof construction complexity from $O(n\log{n})$ to $O(n)$ and turning communication complexity from $O(\log{n})$ to exactly one round. 
This result is achieved by using a different proof construction and verification procedure, while the proof
length stays effectively the same: $O(n)$. 

 The article is structured in the following way: in Section \ref{problem} we define the problem that our protocol is aiming to solve.
 In Section \ref{solution_simple}, we give a high-level overview of SafeComp protocol.
 Section \ref{fixpoint} describes a way of presenting user
 computation function in a form that is appropriate to be
 used in the protocol.
 Section \ref{solution} contains a formal description of SafeComp protocol.
 Section \ref{security} is devoted to a security analysis of
 some aspects of the protocol under clearly stated threat model.
 In Section \ref{attacks}, we discuss some of the known attack vectors and features of the protocol.  
 Our first experimental results are discussed in Section \ref{experiments}. 
 Section \ref{related} contains the overview of related works
 in the field, and comparison of SafeComp with other 
 known protocols.
 
\section{Problem Statement}\label{problem}
We are looking for an efficient verification 
procedure that allows us to check computation integrity giving that the computation was 
performed by some not necessarily fully trusted computing provider (or, simply, provider). 
Along with a computation result, the provider
constructs a certificate whose size reasonably correlates with computation complexity.
Using the certificate, we should be able to  unambiguously check correctness of the performed computation, and do that much faster than a full
recomputation.

\begin{defn}(Computation Certification Problem)
\label{def:cert}\\
  For an arbitrary function 
  $C: \mathbb{N} \rightarrow Result$
  computable at a point $d \in \mathbb{N}$,
  define computable functions:
  $$ proof(C, d) \rightarrow Result \times Proofs $$ 
  $$ verify(C, d, r, prf) \rightarrow \{ True, False \} $$
 where $r \in Result$ is a 
 result of computation under question, 
 that is
 $r \stackrel{?}{=} C(d)$,\\ 
 $prf \in Proofs$ - a proof of computation integrity,
 also called \textit{certificate}. 
 \smallbreak
 The following must hold:
 \begin{enumerate}
 \item $verify(C, d, r, prf) = True \iff C(d) = r$ 
 \label{item:verify}
 \item Computation complexity of $proof(C, d)$ 
 dependents on complexity of $C$ linearly. 
 \item Computation complexity of $verify$ is
 essentially smaller than complexity of $proof$ for any given function $C$, that is
 $$\lim_{n \rightarrow \infty}\frac{v_C(n)}{e_C(n)} = 0$$
 where $v_C(n), e_C(n)$ - asymptotic estimates of
 computation complexity of functions $verify$ and $proof$
 for the function $C$ with input size $n$.
 
 \item As complexity $e_C(n)$ increases, the size of $prf$ grows linearly, that is
 $$ \exists k \forall n \, . \, \vert prf(n) \vert \le k \cdot e_C(n) $$
 
 \end{enumerate}
\end{defn}

\subsubsection*{Probabilistic 
Certificate Verification}
In our case, the requirement \ref{item:verify} 
turns out to be overly rigorous. We will use a weaker requirement instead.

\begin{defn}(Probabilistic Certificate Verification Criteria)\\
Consider the following events:
$$E_0(\lambda) : 
proof_{\lambda}(C, d) = \langle r, prf_\lambda \rangle $$ 
$$E_1(\lambda) : verify_{\lambda}(C, d, r, prf_\lambda) = True $$
$$E_2 : C(d) = r $$
If functions $proof_\lambda$, $verify_\lambda$ satisfy:
$$\lim_{\lambda \rightarrow \infty} 
\Pr [ E_0(\lambda) \cdot E_1(\lambda) \cdot E_2 ] = 1$$ 
where $\lambda$ is a parameter that can affect computational complexity of functions and the size of a certificate,
then we say that algorithms described above meet probabilistic certificate verification criteria.

\end{defn}

In other words, we are depicting not deterministic, but
probabilistic method of solving the defined problem,
where some special parameter regulates the probability 
of the desired outcome.

\subsubsection*{Interactivity}
Some approaches aiming to solve the defined problem require a so-called 
\textit{interactivity}, that is, a verification
 procedure is performed in several rounds of message
 passing between a computing provider and a user.
An upper bound on the number of message-passing rounds 
is called \textit{communication complexity} of a protocol.
An interactive approach for solving the depicted problem 
is also called \textit{protocol for certifying outsourced
compu\-tations}, or, cloud computations.

\subsubsection*{Trusted Weak Computing Device}

Some certification protocols rely on a presence 
of some trusted party which is able to perform
simple computations in a verifiable way, but 
definitely not able to perform the whole computation $C(d)$.

For instance, such device may not be able to compute
a factorial of some big number, but still it 
could perform
an addition of any two natural numbers,
providing some kind of proof of the result. 
If, say, we divide the whole computation of a factorial function
into smaller operations of addition, we could carry out 
the entire computation in a verifiable manner.

\begin{definition}(Trusted Weak Computing Device) \\
A device that is capable of certifying computations \ref{def:cert} for computable functions $f$ such that
computational complexity of $f(n)$ grows linearly with
the size of input, i.e. $ f(n) \in O(n) $, is called \textit{trusted weak computing device}. 
\end{definition}

\noindent Secure enclaves \cite{secure_enclave} and blockchain smart-contracts \cite{buterin2018ethereum} are examples of such devices.

\subsubsection*{Multi-party Protocol}
A classical interpretation of the computation certification problem assumes that a computing provider proves correctness of an obtained result to a user in some way, and the user
examines this proof. An interaction between two parties
happens directly.

 Our model of interaction is different: 
 a user and the set of computing providers (or, simply, providers) interact with each other using a Trusted Weak
 Computing Device (TWCD) as a trusted verifiable
  intermediary. 
 Several providers can run a
 computation task in parallel on their own computing resources. A computation result and 
 a proof are then published within TWCD by some provider who has found a solution. Other providers are able to re-check
 the result. If a disagreement arises, other providers
 would be able to convince other parties that the result 
 written in TWCD is faulty. Otherwise,
 after a period of time, the published result is 
 considered to be correct and a user receives the solution
 to the published computation task.
 Such interaction model is called \textit{multi-party protocol}.

\subsubsection*{Problem Statement}
In this article, we present a multi-party protocol for
 certifying cloud computations with a probabilistic
 certificate verification criteria and a reliance on
 a trusted weak computing device that is able to
 authenticate its users.

For ease of presentation, \textit{we consider a public blockchain smart-contract to play a role of a trusted weak computing device}. But one can imagine other devices being used.

\section{Protocol Overview}
\label{solution_simple}
In this section, we give a high-level overview
of the protocol. Here we presume that a 
user would
like to compute the function $C(x)$ in the 
point $d$, but do not have an ability to
do the whole computation.

\begin{enumerate}
\item 
A user transforms his computable function 
$C$ into
another function $f$ such that applying $f$
to $d$ $n$ times, the result converges to $C(d)$. 

 \item 
 User publishes the function $f$ and the point $d$
 in a smart contract. We assume that the
 smart contract is accessible for multiple
 computing providers.
 
 \item 
 Once a provider receives the computation task $f$ and the point $d$, he starts performing the
 computation $r = f(f(...f(d)))$ until the
 result converges to a fixpoint.
 At the end of each iteration $f(x)$, he also computes a hash $c_i \coloneqq H(x \circ c_{i-1})$.
 These $c_i$ values form a sequence $ cert \coloneqq \langle c_1, ..., c_n \rangle $
 This sequence is called a \textit{certificate sequence} (or, simply, a certificate).

 The  provider computes a certificate fingerprint
 $ hc \coloneqq H(H(cert))$ 
 and a \textit{certificate projection} \label{cert_proj}
 $ cp \coloneqq \langle \pi(c_1), \pi(c_2), ..., \pi(c_n) \rangle $.
 The main idea behind introducing the fingerprint 
 is to give other providers ability to check their own
 certificates against the computed one without publishing
 a certificate sequence in pure. 

 A certificate projection is needed to let other providers find 
 the very first divergent step of a computation in 
 case of a fingerprint disagreement, and do that
 without disclosing the certificate sequence in
 pure. 
 
 A certificate sequence plays a role of 
 a secret value used in a later stage of the 
 protocol to establish a list of all fair
 providers that performed the verification of
 a published result and a certificate, this 
 is why we do not want to publish it in pure. 
 
 \item 
 After finding a solution, one of the  
 providers - we call him \textit{the solver} - publishes the solution $r$,
 the certificate fingerprint $hc$ and the projection $cp$ in a smart-contract. 
  
 \item 
 Other providers that performed the computation, but found the solution
 later than the solver, are able 
 to check the published solution and the 
 computation work integrity by comparing
 certificate fingerprints. This group
 of providers is called \textit{auditors}.
  \item 
  In case of a disagreement, an auditor
  sends the number of a certificate part $i$ 
  from which the divergence starts, 
  the partial value $r_{i-1}$ and 
  certificate parts $c_{i-1}$, $c_i$. 
  \item 
 Using the provided values 
 $i, r_{i-1}, c_{i-1}$, $c_i$ 
 the smart-contract arbiter ensures
 that the published result or 
 computation certificate is indeed
 wrong by performing only a single computation step (details are in Section \ref{solution}).  
  
  \item 
   After some period of time,
   if no proofs of computation disintegrity
   has been provided, the published 
   solution $r$ is considered correct. 
    
  \item 
  Each auditor sends the value $ H(H(cert) \circ id)$ into the smart-contract. Here $id$ is a unique identifier of an auditor. This
  value plays a role of a proof that the computation work has been done.
  \item 
  Finally, the solver sends the value $s$ such that 
  $hc = H(s)$ into the smart-contract.  
  The list of fair auditors is then evaluated.
\end{enumerate}

\section{Iterative Computation}
\label{fixpoint}

\begin{figure}[ht]
\centering
\begin{subfigure}[t]{.35\textwidth}
\lstinputlisting[language=erlang,basicstyle=\footnotesize]{fact.erl}
\caption{Recursive implementation}
\label{fig:fact}
\end{subfigure}\,\,%
\begin{subfigure}[t]{.4\textwidth}
\lstinputlisting[language=erlang,basicstyle=\footnotesize]{fact_fp.erl}
\caption{Iterative implementation}
\label{fig:factfp}
\end{subfigure}
\caption{Recursive vs. Iterative factorial implementation (Erlang)}
\label{fig:fpexample}
\end{figure}

The protocol requires the computation to be 
prepared in an iterative form. 
An example is shown on Fig.\ref{fig:fpexample}:
factorial function is implemented in two
different ways: the standard recursive form
\ref{fig:fact} and iterative form \ref{fig:factfp}.

If you consecutively compute the function 
$factFP$, starting at the point $\{N, 1\}$ and
passing the evaluated result as a next point iteratively,
then, after $N$ iterations, the 
function will converge to the value  $\{0, N!\}$.

The main difference between two implementations
is that the first one (Fig.\ref{fig:fact}) computes
the result at once, while the second (Fig.\ref{fig:factfp}) computes only a part of the whole computation and returns that partial result.
The computation can be continued by passing the
partial result into the function again, and so on,
 until a fixpoint - the solution - is found. 
 
One can start feeling doubt whether it is always
possible to transform an arbitrary computable
function $C(x)$ into such form.
The next theorem shows that not only it is always possible, but can be done in several ways.

\begin{theorem} \label{thm:fp}
For all functions $C: \mathbb{N} \rightarrow \mathbb{N}$ computable at some point $d$ there exist computable functions
$$inj: \mathbb{N} \rightarrow T, 
proj : T \rightarrow \mathbb{N}, F: T \rightarrow T$$ such that
$$ proj (\textbf{fix} \,\, F \, inj(d)) = C(d) $$
where $\textbf{fix} : (T \rightarrow T) \times \mathbb{N} \rightarrow T$ - an operator that calculates fixed point of a function starting at a given point. That is,
$$ \textbf{fix}\,F \, p = F(\textbf{fix} \, F \, p) $$
$T$ - a finite set.
\end{theorem}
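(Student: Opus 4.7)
The plan is to realise the iterative form of $C$ by simulating a Turing machine on input $d$. Since $C$ is computable at $d$, fix some Turing machine $M$ that computes $C$; by assumption $M$ halts on $d$ after a finite number of steps $N$. I would take $T$ to be the finite set of configurations visited by $M$ during this halting computation, a configuration being a triple of the current state, the finite used portion of the tape, and the head position. Since $M$ halts after $N$ steps, $|T| \le N + 1$, so $T$ is indeed finite.

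Next I would define the three functions. Let $inj : \mathbb{N} \to T$ send $d$ to the initial configuration of $M$ on input $d$ (its values at other points do not matter and may be chosen arbitrarily in $T$). Let $F : T \to T$ be the one-step transition function of $M$, extended so that the halting configuration is a fixed point, $F(h) = h$. Let $proj : T \to \mathbb{N}$ decode the natural number written on the tape of a configuration. All three are computable because $T$ is finite and each of them is either a trivial encoding/decoding or a single step of $M$.

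Then I would verify the fixpoint equation. Iterating $F$ from $inj(d)$ passes through exactly the successive configurations of $M$ on $d$; after $N$ iterations it reaches the halting configuration $h$, which by construction satisfies $F(h) = h$. Hence $\textbf{fix}\, F\, inj(d) = h$ and $proj(h) = C(d)$ by the choice of $proj$, establishing the required equality.

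I do not expect a real obstacle, since the claim is essentially a reformulation of the Turing-completeness of state-transition systems. The only points to treat with care are (i) making $F$ total on $T$ by forcing the halting configuration to be a fixed point, so that the $\textbf{fix}$ operator is well defined and reached in finitely many steps, and (ii) keeping $T$ finite, which is possible precisely because we only need correctness at the single point $d$; a variant of the theorem that handled all inputs at once would force $T$ to be infinite. The remark that the transformation can be done \emph{in several ways} is then explained by the freedom in choosing $M$, the encoding of configurations, and the particular finite superset of the $F$-orbit taken as $T$.
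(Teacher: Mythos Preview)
Your proposal is correct and follows essentially the same Turing-machine simulation as the paper's proof in the appendix: take configurations as the state space, let $F$ be the one-step transition with the halting configuration made into a fixed point, and let $inj$/$proj$ encode and decode. The only notable differences are cosmetic: the paper first exhibits a trivial two-state construction ($F(\langle x,0\rangle)=\langle C(x),1\rangle$, $F(\langle x,1\rangle)=\langle x,1\rangle$) before giving the Turing-machine version, and you are more explicit than the paper about securing the finiteness of $T$ by restricting it to the orbit of configurations actually visited on input $d$.
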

\begin{proof}
The proof is moved into Appendix \ref{appendix:proofs}.
\end{proof}

\section{Protocol Specification}\label{solution}
SafeComp protocol rely on availability of a trusted 
weak computing device. We take a smart-contract as a 
practically viable form of such device. Parties - a user and providers - communicate with each other by sending
\textit{transactions} into the smart-contract. 

Lets us define a smart-contract state to be a set of 
its internal variables of some type. Any transaction
 into a smart-contract can potentially change
its state. Within our specification effort, we use 
a concept of a current state and a next state of a 
variable denoting the variable value before and after 
transaction processing respectively.
\bigbreak

\noindent\textbf{Notation}

$\mathbb{N}_p = \{ 0 \, ... \, 2^p - 1 \}$ - 
a set of natural numbers with a given upper bound

$ X, X' $ - a value of variable $X$ before and after transaction processing

\bigbreak

\noindent\textbf{Variables Denotation}

$Id \in 2^\mathbb{N} \cup 
\{ 0 \}$ - 
a set of user identifiers. We use $0$ to denote 
\textit{an undefined} identifier.

$r \in \mathbb{N}$ - a computation result 

$s \in \mathbb{N}$ - a certificate fingerprint 

$solver \in Id$ - 
an identifier of a provider who published a solution 
for the user's task

$V \in 2^{Id}$ - a set of provider identifiers
that managed to prove their verification work.

$L \in 2^{Id}$ - 
a set of participant identifiers
that tried to compromise the protocol in any way 

$P \subseteq Id \times \mathbb{N} $ - 
A set of pairs - identifier $\times$ proof - sent by parties
pretending to be considered as verifiers.

\bigbreak

\noindent\textbf{Input}: User's task $f : \mathbb{N} \rightarrow \mathbb{N}$ and an input data
$d \in \mathbb{N}$.

\noindent\textbf{Output}: $\langle r, s, solver, V, L \rangle$
\bigbreak
\noindent\textbf{Preliminary Setup}
\begin{enumerate}
\item The User defines the function $f$ and the initial point $d$. We discussed requirements for $C$ and $f$ and their relation earlier.

\item Parties agree on some cryptographically secure hash function
$H : \mathbb{N} \rightarrow \mathbb{N}_q$, where 
parameter $q$ is taken according to recommendations for the
hash-function.  

\item The User constructs a computable function
$$ F(\langle x, c \rangle) \coloneqq 
\text{\textbf{IF} x \textbf{==} f(x) \textbf{THEN}} \,
\langle x, c \rangle\,
\text{\textbf{ELSE}} \, \langle f(x), H(x \circ c) \rangle
$$

Here $\circ : \mathbb{N} \times \mathbb{N} 
\rightarrow  \mathbb{N}$ - some non-constant binary operation.
Computation must start at the point $\langle d, H(d) \rangle$

\item 
Parties agree on a function family
$\pi_k : \mathbb{N}^k \rightarrow \mathbb{N}_p^k$, for $k \ge 1$,
such that
$$\pi_1 : \mathbb{N} \rightarrow \mathbb{N}_p  - 
\text{some arbitrary hash-function (not required to be equal to $H(x)$)} $$
$$\pi_k(\langle c_1, ..., c_k\rangle) \coloneqq 
	\langle \pi_1(c_1), ..., \pi_1(c_k) \rangle \text{,   for $k > 1$} $$
Henceforth, we omit $k$ index because its value is clear
from a context: it is a length of a tuple argument.

\item A smart-contract must be published on a blockchain. The smart-contract should give its users ability to:
 \begin{itemize}
 \item
 Publish a computational task $F$, initial data point $d$;
 the user request status is set to ``published''.  
 For every new user request, the smart-contract allocates
 a new set of state variables: 
  $\langle r, s, solver, V, L, P \rangle$, initially empty. 
 \item 
  For any published user request, receive a user request status and a computational task 
  $\langle F, d \rangle$.
 If a request status is set to ``completed'', then also
 receive a certificate projection $cp$ and a solution $r$.
 If a request status is set to ``verified'', then also 
 receive values $\langle r, s, solver, V, L \rangle$
 
 \item 
 For requests with status ``published'', 
 provide the solution $ r_n, c_n, cp, hc $ 
  , such that
 $$ F(\langle r_n, c_n \rangle) = \langle r_n, c_n \rangle$$
 $$ cp \coloneqq \pi(\langle c_1 , ..., c_n\rangle) $$
 $$ hc \coloneqq H(H( \langle c_1, ..., c_n \rangle )) $$
  \bigbreak
  
Let $u$ be an identifier of a provider that sent a solution - the solver.
The smart-contract checks that the pair
$ \langle r_n, c_n \rangle $ 
is indeed a fixpoint of $F$, in this case
the user request status is changed to ``completed'' and
$$solver' = u \land r' = r_n$$

If the point $\langle r_n, c_n \rangle$ 
is not a fixpoint of $F$, the solution is declined, the
user request status does not change. In this case, 
 $$L' = L \cup \{ u \}$$
 
\item 
For user requests with status ``completed'', refute the
published solution by providing the refutation data 
$i$, $c_{i-1}$, $c_i$, $r_{i-1}$, such that
\begin{align}
\pi(c_ {i-1}) = cp[i-1] \, \land \,
&F(\langle r_{i-1}, c_{i-1} \rangle) = \langle r_i, c_i \rangle \land
\pi(c_i) = cp[i] \label{refute_check1} \\
&c_{i+1} \coloneqq H(r_i \circ c_i) \label{refute_check2}  \\
&\pi(c_{i+1}) \neq cp[i+1] \label{refute_check3}
\end{align}
 
Here $cp[i]$ denotes an element of a tuple $cp$, residing 
on the position $i$.

  Let $u$ be an identifier of a provider that has sent
  a refutation. If the smart-contract establishes the
  fact of refutation correctness, the user 
  request status is changed back to ``published''.
  In this case, 
	 $$L' = L \cup \{ solver \} \land
	 V' = V \cup \{ u \} \land 
 	solver' = 0 $$
 \bigbreak
 
 \item
  For user requests with status ``completed'', provide 
  a proof of computation $prf$, such that:
  $$prf := H( H( \langle c_1, ..., c_n \rangle) \circ id)$$
  where $id$ - a unique identifier of a provider.
  In this case,
   $$P' = P \cup \{ (id, prf) \}$$
  
 \item 
 For user requests with status ``completed'', after verification period $T$ has elapsed, 
 provide a secret $s$
 from a user with identifier $solver$, such that:
 $$hc = H(s)$$  
 If the above does not hold, then the status is changed 
 to ``published'' and
 $$ L' = L \cup \{ solver \} $$
 $$ solver' = 0 $$
 Otherwise, the status is changed to ``verified''. In this case,
 $$ V' = V \cup \{x : (x, p) \in P \land H(hc \circ x) = p \} $$
 $$ L' = L \cup \{x : (x, p) \in P \land H(hc \circ x) \neq p \} $$
 \end{itemize}
\end{enumerate} 
 \noindent\textbf{Protocol Steps}
 \begin{enumerate}
\item 
A user publishes a computation request, providing $F$, $d$
\item 
 Potential computing providers read computation requests with the status ``published'' from the smart-contract and possibly
 start to compute $F$ from the point
 $\langle d, H(d) \rangle$

 \item A provider publishes a computed solution 
 $r_n, c_n, cp, hc$ in the smart-contract. We call this
 provider \textit{the solver}.
 
 \item 
 Other providers that were also performing computation,
 verify the published result by comparing the value $cp$
 to their own. 
 \begin{itemize}
 \item 
  If values are different, then send a refutation data
  into the smart-contract.
 \item 
 If values are the same, after verification period, send
 the proof of computation into the smart-contract.
 \end{itemize}
 \item 
 Last of all, the solver sends $s$, the hash of a certificate, into the smart-contract,
 making it possible to find out all fair providers.
 
 \begin{itemize}
 	\item 
 	In case of successful secret validation, the smart-contract constructs an output:
 		$\langle r, s, solver, V, L \rangle$
 	\item 
 	Otherwise, the user request status is changed back
 	to ``published'' and the protocol goes back to step 2.
 	\end{itemize}
\end{enumerate}

\section{Protocol Security Analysis}\label{security}
Any participant of the protocol trying to break its
functional properties is called \textit{intruder}.
We measure protocol security degree as a probability
value of any undesirable event - that is when  
functional properties of the protocol gets broken - within a given intruder model.

\begin{defn}(Intruder model $M_1$)
\begin{itemize}
	\item Finding a pre-image of 
	a chosen hash-function $H(x)$ is considered a computa\-tionally
	difficult task for an intruder.	
	\item Intruder's knowledge regarding user's 
	function $f(x)$ implementation details 
	is consi\-dered to be the same as of a user itself.
	\end{itemize}
\end{defn}

\begin{defn}(Protocol Security Threats)
\begin{itemize}
	\item Event $E_1$ : Refutation of a correct solution.
	\item Event $E_2$ : False proof of  
	a solution verification.
\end{itemize} 
\end{defn}

\begin{theorem}
	A refutation of a correct published solution (Event $E_1$)
	is a computationally difficult task for an intruder from $M_1$.
\end{theorem}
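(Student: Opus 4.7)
The plan is to reduce every successful refutation of a correct solution to the task of inverting one of the hash functions in use, and then invoke intruder model $M_1$. Concretely, I would fix the honest iteration $\langle r^*_j, c^*_j\rangle_{j=0}^n$ obtained by iterating $F$ from $\langle d, H(d)\rangle$, so that $c^*_{j+1} = H(r^*_j \circ c^*_j)$ and $\pi(c^*_j) = cp[j]$. A successful refutation supplies $i, r_{i-1}, c_{i-1}, c_i$ passing \eqref{refute_check1}--\eqref{refute_check3}; unfolding $F$ in its non-fixpoint branch reduces these to the four algebraic conditions $\pi(c_{i-1}) = cp[i-1]$, $r_i = f(r_{i-1})$, $c_i = H(r_{i-1}\circ c_{i-1})$, and $\pi(H(r_i\circ c_i)) \neq cp[i+1] = \pi(c^*_{i+1})$.

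Next I would split on whether $c_i$ coincides with the honest $c^*_i$. If $c_i = c^*_i$, the honest counterparts $(r^*_{i-1}, c^*_{i-1})$ already hash to $c^*_i$ under $H$, so either $(r_{i-1}, c_{i-1}) = (r^*_{i-1}, c^*_{i-1})$ -- in which case $H(r_i\circ c_i) = c^*_{i+1}$ and the last condition fails, contradicting the refutation -- or the intruder has exhibited a distinct pre-image of the publicly known value $c^*_i$ under $H$. If instead $c_i \neq c^*_i$, the equality $\pi(c_i) = cp[i] = \pi(c^*_i)$ exhibits a second pre-image of the published value $cp[i]$ under $\pi_1$ directly, on top of which the intruder still owes an $H$-pre-image producing $c_i$. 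Either way the refutation hides a hash inversion, and taking a union bound over the at most $n$ candidate indices $i$ converts the intruder's success probability into a sum of pre-image-finding probabilities.

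The main obstacle is that $M_1$ only literally asserts pre-image hardness for $H$, whereas the second case above invokes hardness for the auxiliary projection hash $\pi_1$. I would handle this either by strengthening $M_1$ to cover $\pi_1$ -- natural, since the protocol already requires $\pi_1$ to be a cryptographic hash -- or by observing that the intruder, being able to recompute the entire honest $c^*$-sequence from $f$ and $d$, is genuinely playing a second-pre-image game whose infeasibility follows from standard pre-image resistance together with a regularity assumption on the distribution of $c^*_i$. The remaining work is bookkeeping: bound the refutation probability by $n$ times the maximum hash-inversion probability, which is negligible under $M_1$ and yields the required computational difficulty.
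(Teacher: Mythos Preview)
Your proposal is sound and in fact more careful than the paper's own argument. The paper proceeds very tersely: it observes that condition \eqref{eq:sec} is satisfiable for ``almost any'' point $\langle r_i,c_i\rangle$ once $F$ is assumed total, and then simply asserts that satisfying condition \eqref{eq:fst} ``is reduced to the known problem of finding a pre-image'' of $H$, with no case distinction and no reference to the honest trace. In particular the paper never addresses the point you isolate --- that the honest values $(r^*_{i-1},c^*_{i-1},c^*_i)$ already satisfy \eqref{eq:fst}, so the reduction only makes sense once one argues the refuting tuple must differ from the honest one in a way that forces a hash inversion. Your split on $c_i=c^*_i$ versus $c_i\neq c^*_i$ supplies exactly that missing step.

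You also surface an assumption the paper glosses over: the second branch of your case split needs second-pre-image hardness of $\pi_1$, whereas $M_1$ only postulates hardness for $H$, and the protocol description merely calls $\pi_1$ ``some arbitrary hash-function''. The paper's proof hides this behind the phrase ``with appropriately chosen $p,q$ values'', which silently imports a hardness assumption on the composite $\pi\circ H$; your proposal is more honest in flagging it as an additional hypothesis. One small gap in your first branch: concluding ``distinct pre-image of $c^*_i$'' from $(r_{i-1},c_{i-1})\neq(r^*_{i-1},c^*_{i-1})$ tacitly uses injectivity of $\circ$, which the paper only requires to be ``non-constant''. You should state this explicitly (in practice $\circ$ is concatenation, so it is harmless), otherwise a non-injective $\circ$ could let the intruder pass \eqref{eq:fst}--\eqref{eq:sec} without any hash inversion.
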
 
\begin{proof}
	To refute a published solution, the protocol requires providing such $i$, $c_{i-1}$, $c_i$, $r_{i-1}$
	that
	\begin{align}
		\pi(c_{i-1}) &= cp[i-1] \land \pi(c_i) = cp[i] \land H(r_{i-1} \circ c_{i-1}) = c_i 
		\label{eq:fst}\\
		&F(\langle r_i, c_i \rangle) = (r_{i+1}, c_{i+1}) \land \pi(c_{i+1}) \neq cp[i+1] \label{eq:sec}
	\end{align}
	Suppose, $F$ is a total function $\mathbb{N} \times \mathbb{N}_q \rightarrow \mathbb{N}$ - in practice, this is almost always not the case, 
	but let us take this assumption to simplify 
	work for an intruder in our analysis. 
	In this case, it is easy to satisfy \ref{eq:sec}.
The requirement could be fulfilled for almost any
point $\langle r, c \rangle$. Therefore, the main
burden falls on the predicate \ref{eq:fst}.
	
	Suppose that hash-functions have the following signatures:
	$H: \mathbb{N} \rightarrow \mathbb{N}_q$, 
	$\pi:\mathbb{N}_q \rightarrow \mathbb{N}_p$.
We also require $H$ to be cryptographically secure.
Then, a problem of finding values satisfying 
the predicate \ref{eq:fst} is 
reduced to the known problem of finding a pre-image by a given image of 
a cryptographically secure hash-function, and is considered to be hard 
for an intruder in $M_1$ with appropriately 
chosen $p,q$ values.
\end{proof}

\begin{theorem}
Submitting of false proof of verification for a
published solution (event $E_2$) is 
computationally difficult task for any intruder 
from $M_1$.
\end{theorem}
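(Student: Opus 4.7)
The plan is to reduce the production of a false verification proof to either inverting the hash function $H$ on the public fingerprint, or to forging an output of the keyed construction $x \mapsto H(s \circ x)$, both of which are hard under the assumptions of $M_1$.

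First, I would make precise what an intruder must deliver. A pair $(id, prf)$ is counted into $V$ only if $prf = H(s \circ id)$, where $s = H(\langle c_1, \ldots, c_n \rangle)$ is the pre-image of the published fingerprint $hc = H(s)$. The protocol timing is crucial here: the pair must be deposited in $P$ \emph{before} the solver reveals $s$ in the closing step. At that moment, the only on-chain information related to $s$ is the fingerprint $hc = H(s)$, together with proofs $\{H(s \circ id_j)\}_j$ already submitted by honest auditors for distinct identifiers $id_j$.

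Next, I would split the analysis into two exhaustive strategies. In Strategy A, the intruder recovers $s$ from $hc$ and then honestly evaluates $H(s \circ id)$; this is exactly the preimage problem for the cryptographically secure $H$, hard in $M_1$ for an appropriately chosen output size $q$. In Strategy B, the intruder outputs $prf$ without ever recovering $s$; here I would build a reduction that turns any non-negligible-success intruder into an algorithm that either inverts $H$ on $hc$ or forges a value of $H(s \circ \cdot)$ on a fresh identifier $id^\ast$ given its values on other identifiers, contradicting the one-wayness assumption of $M_1$.

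The main obstacle I expect is giving Strategy B a clean formal treatment: one has to argue that the honest auditors' published proofs leak essentially no information about $s$ usable to predict $H(s \circ id^\ast)$ for a fresh $id^\ast$. Under the cryptographic modeling of $H$ implicit in $M_1$, the value $H(s \circ id^\ast)$ is independent of the adversary's view until the adversary queries $H$ on $(s \circ id^\ast)$, and such a query itself presupposes knowledge of $s$. A union bound over the polynomial number of identifiers and guessing attempts then bounds the overall probability of event $E_2$ by the advantage of a preimage attacker on $H$, which is negligible for a suitably chosen $q$.
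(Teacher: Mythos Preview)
Your proposal is correct and follows essentially the same line as the paper: identify that a valid proof requires $prf = H(s \circ id)$ while the intruder only sees $hc = H(s)$ and other auditors' values $H(s \circ id_i)$, then argue that producing such a $prf$ without having computed $s$ reduces to the hash preimage problem, which is hard in $M_1$. The paper's argument is less structured than your Strategy~A/B split and adds a concrete probability estimate $P_s = 1/2^{q n}$ for guessing the full certificate sequence, together with a brief remark that even a collision $s'$ with $H(s')=hc$ does not help for a suitably chosen operation~$\circ$; your reduction-based treatment of Strategy~B is a cleaner (if slightly stronger, essentially random-oracle-flavored) version of that same remark.
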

\begin{proof}
	To prove a published solution verification work,
	the protocol requires a participant to provide 
	the following value:
	$$ prf = H(s \circ id)$$
	Here, $id$ is a unique identifier of a participant.
	Every unique $id$ is allowed to publish 
	only a single $prf$ as a proof.
	Additionally, we consider an intruder to know the following values:
	\begin{align}
		&hc = H(s) \label{eq:hc}\\
		prf&_i = H(s\, \circ \, id_i) \label{eq:prf}
		, \text{  for } i = 1,2,3,...
	\end{align}
	Identifiers $id_i$ are also known to the intruder.
	Here, $s = H( \langle c_1, c_2, ... c_n \rangle )$  is a fingerprint of the certificate. 
	A probability of finding the desired 
	pre-image of $s$ is:
	$$ P_s = \frac{1}{2^{q \cdot n}}, 
	\text{где  }, q \ge 64, n \gg 10^3 $$
Here, values for $q$ - a hash-function image size, and $n$ - a number of computation steps, 
are specified according to practical considerations.
So, in this case, an event $E_2$ can be considered to be
highly improbable. 
	
An intruder could make an attempt to find a collision in
\ref{eq:hc}: find $s'$ such that $H(s') = hc$. But, in
this case, with carefully selected operation $\circ$,
a probability of finding satisfying values 
for \ref{eq:prf} with the founded collision $s'$ 
is also insignificant.
\end{proof}

\subsubsection*{Economic Incentives}
Earlier, we have discussed some aspects of the protocol security relying on an assumption that the 
protocol's cryptographic components is secure enough.
It is possible to
enhance reliability of the protocol even further by
exploring the immanent feature of smart-contracts:
\textit{economic incentives} for its users - a powerful
tool for stimulating rational parties strictly follow protocol rules. 

In this paradigm, a stability of a
protocol depends not only on security of cryptographic mechanisms, but also
on an incentive scheme consisting of 
penalties and premiums for participants. 
We are not going to discuss this in deep, but 
give a high level overview of such scheme in form of 
an extension to protocol steps.

\begin{enumerate}
\item  
User publishes a computation request
in a smart-contract, supplying the
request with $F$, $d$ and a cryptocurrency deposit $D_r$.

\item 
Providers read the request and start seeking a 
solution.
\item 
A provider who computed the solution
first, publish his result in the 
smart-contract, supplying
the transaction with a cryptocurrency
deposit $D_s$

\item 
Other providers verify the published
result.
  \begin{itemize}
  \item 
  If a divergence is found, one publishes a refutation in the smart-contract, supplying the transaction with a cryptocurrency deposit $D_p$ 
  \item 
  If no divergence is found, one sends a proof of computation in the smart-contract, supplying the transaction
  with a cryptocurrency deposit $D_w$  
  \end{itemize}
\end{enumerate}

At the end, when a solution is found
and verified, a smart-contract has
an amount of cryptocurrency on its account equal to
$s = D_r + D_s + m \cdot D_p + n \cdot D_w $
where $n$ - number of participants
successfully proved the computational work, 
$m$ - number of refuted solutions.
The protocol output is a tuple:
$\langle r, s, solver, V, L \rangle$, 
so the smart-contract is able to redistribute funds $s$ between
a $solver$ and fair auditors $V$,
penalizing bad actors from $L$. 

\section{Protocol Features}\label{attacks}
Some aspects
of SafeComp protocol may cause difficulties in a 
practical implementation, or even present attack vectors. Here, we discuss some of
those issues and ways to overcome them.

\bigbreak
\noindent\textbf{
Transforming computation into iterative form.}
In theorem \ref{thm:fp}, it was shown
that a computation could be always presented in an iterative form and
there are multiple ways to do the transformation.

For example, in the work 
\cite{truebit} the following approach
is used: a user's program is compiled
from high-level language into a
byte-code of some virtual machine.
Then, $n$ elementary steps of interpreter is taken as a single computation step. 
A state of the interpreter - registers and memory - 
is then taken as a computation state. 
The next step is executed starting from 
that interpreter state, and so on, until
the solution is computed. 

We use a different approach: the user's
program is written in a so-called
\textit{Continuation Passing Style (CPS)}
\cite{reynolds1993discoveries}, but
instead of a tail call, the argument
together with a tail call function tag 
is returned as an intermediary result.
Such representation achieves the same
technical result while may sometimes
lead to more compact state representations and does not rely
on a special interpreter.
A program's source code can be
automatically transformed into 
CPS-style form \cite{appel1989continuation},
but we did not investigate this 
question in deep.
\bigbreak
\noindent
\textbf{Divergent function $F$.}
Suppose someone places a divergent
computation task in a smart-contract.
A computational provider will spend
its computing resource, but will never
find a solution: an obvious loss for
a provider. 

There are some ways to overcome this
difficulty:
 \begin{itemize}
 \item 
 Implement a computational algorithm in a programming
 language that \textit{guarantees}
 termination. 
 For example, any \textit{primitive recursive language} gives such guarantee by construction, and
 is expressive enough to develop 
 nearly all practically interesting
 algorithms \cite{vereshagin}.
 
 In this case, a user sends not a byte-code, but a 
 program source code $P$, written in one of those
 languages. 
 A provider then compiles $P$ into $f$.
 If compilation is successful, the provider
 is guaranteed that the computation
 is finite and can be taken into work. 

\item 
 Extend the protocol to support 
 certification of partial computation
 results, i.e. results that have not
 converged to a fixpoint, but is 
 correct from a certificate point of
 view.
 In this case, the longest result respecting
 the certificate integrity check is considered to 
 be a correct solution. 
 \end{itemize}
\bigbreak
\noindent
\textbf{Bytecode size of $F$.}
It might be the case that the size
of a function $F$ or a point $d$ is
greater than a maximum size of data unit of chosen blockchain platform. 

Unfortunately, it is a fundamental limitation of the protocol. The 
following inequation must hold:
$|F| + |d| \le T_{max}$, where $T_{max}$ - 
maximum size of data unit for a 
chosen TWCD.
In case of smart-contracts, if this
condition does not hold, such transaction will be declined.
\bigbreak
\noindent\textbf{Size of refutation.}
If a dispute against a published solution arises, a provider must
send a refutation data
$c_{i-1}, c_i, r_{i-1}$ into the
smart-contract. 

The point $r_{i-1}$ 
represents an intermediary computing
state.
It might be the case that the size of
$r_{i-1}$ is too big to be sent into
the smart-contract. 
Unfortunately, it is a fundamental limitation of the protocol. The 
following must hold: 
$ \forall r, |f(r)| \le T_{max}$,
where $T_{max}$ - maximum size of data unit for a 
chosen TWCD.
\bigbreak
\noindent\textbf{
Certificate projection size.}
It might be the case that a certificate
projection is too big to be
placed directly into a smart-contract
or other chosen TWCD. 
At least in case of smart-contracts, it is possible to use
an external immutable storage to store the projection there. For example, 
IPFS distributed file system may be used
\cite{benet2014ipfs}. 
A smart-contract could use an oracle
technology to access the corresponding data \cite{kochovski2019trust}. 

\section{Experimental Evaluation}\label{experiments}
In order to evaluate SafeComp protocol,
we have implemented its logic and one user 
computation task.

Due to time constraints, we have
chosen not implement the protocol logic
on a real blockchain.  
We implemented the protocol using Erlang programming language
and its BEAM virtual machine instead.
\footnote{
Our first attempts to implement 
the user task in Solidity (Ethereum blockchain) turned out to 
be highly cumbersome activity due to limited expressive
power of the language and limitations of a virtual machine. 
 }

As for the user task, we took UNSAT problem: 
the problem of determining that there is no 
satisfying assignment for a Boolean formula. 
UNSAT problem is conjugated to a famous NP-complete
problem SAT 
\cite{cook1971complexity},
but unlike SAT, its solution can not be checked in polynomial
time in general case.
Our choice is justified by the fact that an algorithm for
solving UNSAT problem has a direct practical application
 - a
symbolic model-checking of computer algorithms. 

There are many algorithms for solving SAT/UNSAT
problem, and it is known that no one of them 
guarantees a working
time better, than $O(2^n)$, where $n$ - the number of
boolean variables in Boolean formula. Still, they are 
quite good in practice.

We have chosen to implement one of the simplest such 
algorithms - DPLL \cite{davis1962machine}. 
There are many DPLL implementations available, but in 
our case, we had to implement it in an iterative form 
discussed in Section \ref{fixpoint}.

As for the input, passed into DPLL algorithm, we 
took an program equivalence checking problem. 
Particulary, we check equivalence of programs where 
each program implement a FIFO queue, but in different
way (the problem is taken from \cite{biere99}). 

The problem is encoded in a 
Conjunctive Normal Form (CNF)formula $d$, 
such that if $DPLL(d) = Unsat$, 
it means an equivalence of programs. 
Otherwise the algorithm outputs a counter-example.

In this experiment, we were interested in the following:
\begin{enumerate}
\item 
How much the computation time of the function $f$ will
change between the usual recursive form ($T_1$) and 
its iterative form ($T_2$).
\item 
How much a bytecode size of the function $f$ will change
between the usual recursive ($S_1$) and iterative ($S_2$)
forms.

\item 
What will be a number of iterations needed to get 
from the initial point to a fixpoint of $f$ (denoted as $n$);
what will be a certificate size ($C_f$)?

\item 
What will be the greatest size of an intermediary function
result $f^i(d)$ (denoted as $d_{max}$), and the size
of initial point $d_0$? 
\end{enumerate}

Experiments were carried out on a computing node
Intel Xeon Gold 6254 CPU 3.10GHz x 4 Cores, 16GB RAM;
Execution environment: Erlang 20, ERTS 9.3.3.11, running
under Linux Fedora 29 OS.

\begin{center}
\begin{tabular}{|c|c|c|c|c|c|c|c|c|c|c|}
\hline
	 \ & $d_0$ & $S_1$ & $S_2$ & $T_1$ & $T_2$ & 
	 $d_{max}$ & $C_f$ & $n$ \\
\hline
	$QueueInvar_2$ & 5703 & 2064 & 2272 & 0.1 & 0.4 & 42028 & 52992 & 
	1656 \\
\hline
	$QueueInvar_4$ & 13707 & 2064 & 2272 & 193 & 423 & 170758 & 13145664 & 410802 \\
\hline
\end{tabular}
\end{center}
Here $T_1$, $T_2$ is measured in seconds, 
$d_0, d_{max}, S_1, S_2, C_f$ - in bytes,
$n$ - number of iterations.
\bigbreak
Besides, we implemented the solution 
refutation step scenario consisting of the 
following steps:
\begin{enumerate}
\item The user task DPLL(x) presented in an
iterative form is published together with
the initial data $d$.

The size of a user task $S_2$ and its data $d_0$ permits to send it into TWCD - the
smart-contract.

\item 
Some participant sends intentionally incorrect
solution, consisting of values
$r_n, cp, hc$.
It might be the case that the size of $cp$ 
(measured as a fraction of $C_f$) is
larger than smart-contract is able to process.
In this case, $cp$ is published on an external
immutable storage, for example IPFS.
Instead of $cp$, it sends a link to the 
data object residing on IPFS.
It is a responsibility of the solution
provider to make this link alive and available. 

\item 
Some other participant, after checking
the published solution and detecting the
error, constructs a refutation. The 
refutation $i$, $c_{i-1}$, $c_i$, $r_{i-1}$ is sent into the smart-contract.

\item 
The smart-contract checks the provided
refutation by applying rules
\ref{refute_check1}, \ref{refute_check2}, \ref{refute_check3} from Section 
 \ref{solution}. 
If the link to IPFS is published instead of
pure $cp$ value, then the smart-contract asks
a trusted oracle to provide data locating 
at $i-1$ and $i$ offset, receiving $c_{i-1}$,
$c_i$. 
If the IPFS link is unavailable, the solution
is declined.
\end{enumerate}

A Proof-of-Concept implementation of the
protocol together with described scenario
and the user task program 
is available at the repository \cite{sources}.

All IPFS and Oracle-related functionality is modelled using a usual program code, without
making any external service calls.   

\section{Related Works} \label{related}
Certification of computation integrity performed by
some party we do not fully trust is considered a 
classic problem in modern cryptography.
Several solutions to this problem have been proposed,
each solution with its own set of compromises.
One set of methods rely on a so-called
 \textit{Probabilistically Checkable Proofs (PCP)}
theorem for computations in class NP.
  
A reminder: NP class consists of decision problems 
such that there exists a verification procedure able 
to check its solutions in polynomial time.
 
PCP theorem states that any solution for an NP problem
could be checked by a polynomial time algorithm using a 
fixed number of randomly chosen structural elements
of a solution \cite{pcp}.

Methods belonging to this family could be 
characterized by the following pattern of interaction:
 1) A user defines his computation task in a form of a 
 boolean circuit
 (or functional elements circuit \cite{lozhkin});
 constructs a special polynomial - 
 Algebraic Normal Form (ANF; amount of terms 
 in ANF depends on the number of functional elements 
 in the circuit \cite{vollmer2013introduction});
 sends the circuit and the input data to 
 the provider
 2) The provider derives corresponding boolean
 function; using this function, he gets the ANF 
 and encodes the output of each of the circuit nodes 
 in ANF; computes both the result of a computation 
 and a certificate. The certificate can be stored 
 on the provider's part, in the cloud or transmitted
 with the result to the user
 3) To verity the certificate, the user utilizes one 
 of approaches shown below. Generally, in order to
 verify a certificate, it is enough to randomly 
 chose a limited number of circuit nodes for 
 each approach, as PCP theorem implies.
 
 \begin{figure}[t]
 \captionsetup{width=.8\linewidth}
\begin{center}
\begin{tabular}{|c|c|c|c|c|}
\hline
\, & $inter(n)$ & $proof(n)$ & $verify(n)$ & $cert(n)$ \\
\hline
Libra Non-ZKP \footnotemark & $O(d(n) \cdot \log{s(n)})$ & $O(s(n))$ & $O(d(n) \cdot \log s(n))$ & $O(d(n) \cdot \log s(n))$ \\
\hline
TrueBit & $O(\log n)$ & 
 $ O(n \log{n})$ & $O(1)$ & $O(n)$ \\
\Xhline{3\arrayrulewidth}
SafeComp & 1 & $O(n)$ & $O(1)$ & $O(n)$ \\
\Xhline{3\arrayrulewidth}

\end{tabular}
\end{center}
\caption{
\footnotesize{
	Asymptotic estimates for several certification algorithms.
	Denotation:
		$proof(n)$ - complexity of proof construction,
		$verify(n)$ - complexity of proof verification,
		$inter(n)$ - number of interaction rounds,
		$cert(n)$ - size of certificate,\\
		$s(n)$ - size of functional element circuit as 
		a function of task input size,
		$d(n)$ - depth of a circuit, i.e. a maximum path length in the circuit.
}}
\label{fig:diff}
\end{figure}
 \footnotetext{Libra is a Zero-Knowledge 
 (ZK) protocol: it hides the user 
 function input from the prover, and 
 this is not what we would like to compare
 with, because, both SafeComp and TrueBit do not hide user input. 
 If we omit ZK property,
  Libra inherits communication complexity
  of GKP protocol \cite{goldwasser2015delegating} with some
  modifications. }
 There are known approaches to certificate 
 verification:
 1) Interactive approach without commitments: 
 Query a certificate (residing on a provider's part
 or in a cloud) at a different nodes until the user is
 satisfied \cite{goldwasser2015delegating}
 \cite{xie2019libra} \cite{thaler2013time}
2) Interactive approach with commitments: 
form commitments with cryptographic protocols 
that verifies the integrity of a certificate. 
The key difference from the previous approach is 
a significant extension of the class of 
 verification tasks
 \cite{setty2013resolving} \cite{setty2012taking}
 \cite{parno2013pinocchio}
 	   
PCP-inspired approaches impose considerable limitations
on verifiable computations. For instance, accurate
upper estimates
for the number of loop iterations and the size of data
structure ought to be known in advance.

Protocol Libra
\cite{xie2019libra}
belongs to those methods relying on PCP theorem, and
is one of the latest methods known from the literature.
It cannot be directly matched with our protocol because it has a number of limitations, for example, on the structure of user's computations. Nevertheless, in some cases it can be applied to obtain the same technical result. Therefore, we also give bounds for it.

In the table \ref{fig:diff}, we compare the nearest known analogues in terms of their asymptotic behaviour.
 
\subsubsection*{Comparision with TrueBit}
With the advent of blockchains equipped with smart-contracts programming capability, another approach for solving 
the computation certification problem has emerged.
In this setting, a smart-contract is used as 
a transparent autonomous arbiter able 
to resolve computation disputes among parties.

The first protocol to explore this idea was
TrueBit \cite{truebit}. We now give a brief
compari\-son,
highlighting the main differences between
TrueBit and SafeComp.
\\
\textbf{Solution refutation procedure.}
If a dispute against a solution arises, 
TrueBit entails an interactive game between two
 computation providers with the aim of finding
 the very first incorrect computation step
 and convincing the smart-contract that this
 step is indeed wrong. 
The game is played in several rounds.
The number
of rounds is bounded by $O(\log{n})$, where $n$
is the number of computation steps.
On each round, a solver and another provider compute
several Merkle tree roots: 
the computation states are taken for the leafs.
Those values are needed to find the source of
disagreement.
Complexity of computing one such root is bounded by
$O(n)$.
Total computation complexity for constructing a
refutation, for all rounds, is bounded by $O(n \log{n})$.
The advantage of such approach is that it allows
parties to optimize the data volume sent into a
smart-contract, but with an extra cost of computing
Merkle Tree roots, several times.

In SafeComp, we use another approach: a solver
publishes both a solution to a task and a certificate
projection - a partially disclosed certificate 
sequence (Section \ref{solution_simple}, p.\ref{cert_proj}) - that allows other parties to find a
divergent part of a certificate fast, in $O(\log{n})$ steps using binary search. 
Both the size and computation complexity for a certificate has a bound $O(n)$.
If a certificate projection is large, we use an
external immutable  storage, IPFS for example,
using an oracle technology to move data into a smart-contract in case it is needed.
The solution refutation procedure in our case is, thereby, done in 1 round, lowering the complexity
of the procedure from $O(n \log{n})$ to $O(n + \log{n}) = O(n)$, but at the expense of 
writing extra $O(n)$ data, i.e. a certificate projection, into the network.  
\\
\textbf{Form of Computational Task.}
In case of TrueBit, a user task must be compiled
into a bytecode of some virtual machine. One have
to place an interpreter for the bytecode in the
blockchain, so a smart-contract will be able to
resolve disputes.  
In our case, one needs to present its computation
in a special - iterative - form, meaning that it may
entail program rewriting in a different style. 
This may cause inconvenience, but maybe soften by
the fact that one does not need to place 
an interpreter on the blockchain. 
We are unsure if a suitable interpreter exists at the moment. 
 
\section{Conclusion and Future Work}
In this article, we presented a protocol called 
SafeComp aiming to solve
a cloud computation integrity certification problem 
in a specified context. 

We have given a formal specification of the protocol,
proved some of its security properties within 
a specified threat model.  

Comparing to the nearest related work, SafeComp lowers
the refutation procedure complexity considerably,
and communication complexity becomes exactly one round.

We evaluated the protocol and made some measures that
convinces us in viability of the protocol in practice. 
For the future work, we would like to do the following:

1) 
Propose a provably reliable economic
incentives model for participants. 
As for now, we almost completely omitted the subject. 

2) 
Implement SafeComp protocol within some 
public blockchain, and perform a computation
certification for some practically interesting computations.

3) Investigate a relation between an 
iterative representation of a user program and 
its intermediary state size, comparing to the 
virtual machine byte-code representation. 

\subsubsection*{Acknowledgements}
We would like to thank Andrey Rybkin and Mikhail Borodin
for their constructive critics of the protocol: during
one of discussions a vulnerability in the protocol logic 
was found and fixed. We also thank Jason
Teutsch for his thoughtful comments and
suggestions on this work, together with
clarification on TrueBit internals. 

\def\refname{References}
\bibliographystyle{plain}
\bibliography{safecomp}

\section*{Appendix 1} \label{appendix:proofs}
\begin{theorem-non}
For all functions $C: \mathbb{N} \rightarrow \mathbb{N}$ computable at some point $d$ there exist computable functions
$$inj: \mathbb{N} \rightarrow T, 
proj : T \rightarrow \mathbb{N}, F: T \rightarrow T$$ such that
$$ proj (\textbf{fix} \,\, F \, inj(d)) = C(d) $$
where $\textbf{fix} : (T \rightarrow T) \times \mathbb{N} \rightarrow T$ - an operator that calculates fixed point of a function starting from a given point. That is,
$$ \textbf{fix}\,F \, p = F(\textbf{fix} \, F \, p) $$
$T$ - a finite set.
\end{theorem-non}
\begin{proof}
Let
$$inj(d) = \langle d, 0 \rangle \text{ \,\, \,\,      } 
proj( \langle x, y \rangle ) = x$$
$$F(\langle x, 0 \rangle) = \langle C(x), 1 \rangle \text{ \,\, \,\,      }  F(\langle x, 1 \rangle) = \langle x, 1 \rangle$$
Selected functions satisfy conditions of the theorem. Such representation of $F$
could be called \textit{trivial}: it does not help to divide function $C(x)$ into composite elementary units.
But is it possible to represent $F$ in any non-trivial way? Yes, we can.
The following argument proofs this fact in a  constructive way. 

According to Turing thesis, every
function $C(x)$, computable at a point, can be represented as a Turing machine. Let $M$
be a machine corresponding to the function $C(x)$, that is
$$ M = (Q, \Sigma, \Gamma, \delta, q_0, B, Q_f) $$
$$ Q - \text{non-empty set of states}, \Gamma - \text{non-empty set of tape alphabet symbols}$$
$$ \Sigma \subseteq \Gamma - \text{set of input symbols}, Q_f \subseteq Q - \text{set of possible states}$$
$$ q_0 \in Q - \text{initial state} $$
$$ \delta: Q \times \Gamma \rightarrow Q \times \Gamma \times \{L,R\}  - \text{transition function}$$
Lets define the function $F$ as:
\begin{gather*}
F( \langle M, I, q, p  \rangle) = 
\begin{cases} 
\langle M, I', q', p' \rangle, \text{ if } q \notin Q_f\\
\langle M, I, q, p  \rangle, \text{ otherwise }
\end{cases}
\end{gather*}
where
$$ (q', X, p') = \delta(q, I_p), \text{ where } I_p \text{ is a content of the tape's cell at the position } p $$
$$ I' = I[X / p] $$
Starting from the position $p$ and the state $q$, the machine $M$ changes state of the tape  from $I$ to $I'$ (replacing content of $p$ with $X$) and the position of its head $p'$. If it reaches an accepting state (i.e. a member of $Q_f$)
it wouldn't do any steps further, so 
a fixpoint of $F$ is obtained.

Theorem statement suggests that a function $C$ is computable at a point $d$, so, after some number of iterations, the machine will necessarily reach an accepting state.

Notice that elements $\langle M, I, q, p \rangle$ can be encoded, for instance, by a finite number of
natural numbers. An example of such an encoding can be found in \cite{hopcroft}.

The function $F$ is computable because it  relies on a few computable functions: 
a transition function 
$\delta(q,X)$, a function of checking whether an element belongs to a set, i.e $q \in Q_f$, and a choice function $if-then-else$. All these functions are computable if sets $Q$ and 
$\Gamma$ are finite.

Let $inj(d) := \langle M, I_0, q_0, 1 \rangle$, where $I_0$ is an initial state of the tape that contains $d$ among other things.
Let $proj(\langle M, I, q, p \rangle)$ be a function that extracts the value $C(d)$ from the tape $I$.
Such functions could be constructed because it
is always possible to allocate space for an initial
 value and for an answer.
The answer $C(d)$ is presented as a finite number
of cells on the tape since computability of the
function at the point guarantees a finite number 
of steps of the machine.

If a size of an answer can not be known in advance,
we could always fix the initial cell and introduce a
unique mark denoting an end of area reserved for 
an answer.

The described representation of functions $F, inj, proj$ satisfies requirements of the theorem and is not trivial.
\end{proof}
\end{document}